\theoremstyle{definition}
\newtheorem{definition}{Definition}
\newtheorem{example}{Example}
\theoremstyle{plain}
\newtheorem{proposition}{Proposition}
\newtheorem{lemma}[proposition]{Lemma}
\newtheorem{theorem}[proposition]{Theorem}
\begin{document}
\title{A Solution to Yamakami's Problem on Advised Context-free Languages
}
\author{Toshio Suzuki
\thanks{
This work was partially supported by Japan Society for the Promotion of Science (JSPS) KAKENHI (C) 22540146.}
\\ 
Department of Mathematics and Information Sciences, \\ 
Tokyo Metropolitan University, \\ 
Minami-Ohsawa, Hachioji, Tokyo 192-0397, Japan\\
%
toshio-suzuki@tmu.ac.jp
}
\date{\today}

\maketitle              

\begin{abstract}
Yamakami (\textit{Theoret. Comput. Sci.}, 2011) studies context-free languages with advice functions. 
Here, the length of an advice is assumed to be the same as that of an input. 
Let CFL and CFL/$n$ denote the class of all context-free languages and that with advice functions, respectively. 
We let CFL(2) denote the class of intersections of two context-free languages. 
An interesting direction of a research is asking how complex CFL(2) is, relative to CFL. 
Yamakami raised a problem whether there is a CFL-immune set in CFL(2) - CFL/$n$. The best known so far is that LSPACE - CFL/$n$ has a CFL-immune set, 
where LSPACE denotes the class of languages recognized in logarithmic-space. 
We present an affirmative solution to his problem. 
Two key concepts of our proof are the nested palindrome and Yamakami's swapping lemma. 
The swapping lemma is applicable to the setting where the pumping lemma (Bar-Hillel's lemma) does not work. Our proof is an example showing how useful the swapping lemma is. 

\vspace{\baselineskip}

Keywords: context-free language; push-down automaton; advice function; non-uniform complexity class; immune set. 
\end{abstract}

\section{Introduction}

The regular languages have beautiful closure properties. For example, given two regular languages, their intersection is a regular language. Nevertheless, in the studies of programming languages, most of important languages are not regular. The same holds in the studies of  formal models of natural languages. In the case of classes larger than the regular languages,  closure properties are more difficult than the regular cases. 

In particular,  given two context-free languages, their intersection is not necessarily context-free. For a positive integer $k$, we consider the intersection of $k$ context-free languages, and let CFL($k$) denote the class of all such intersections. CFL(1) is CFL, the class of all context-free languages. It is known that CFL($k$) is a proper subset of CFL($k+1$). 

How complex is CFL($k+1$), relative to CFL($k$)? 
An interesting observation is given by Flajolet and Steyaert \cite{FS1974}. 
Let $L_{3\mathrm{eq}}$ denote the set of all strings of the form $0^n 1^n 2^n$ where $n$ is a natural number. 
It is easily seen that $L_{3\mathrm{eq}}$ belongs to CFL($2$). 
Flajolet and Steyaert observed that $L_{3\mathrm{eq}}$ is CFL-immune. 

Here, an immune set is a key concept in the classical recursion theory, namely in Post's problem. 
Later, immune sets relative to complexity classes are studied in the complexity theory \cite{YS2005}. Given a class ${\cal C}$ of languages, an infinite  language $A$ is \emph{${\cal C}$-immune}\/ if no infinite subset of $A$ belongs to ${\cal C}$. 

\vspace{\baselineskip}

\noindent
\textbf{Yamakami's problem}~
Yamakami \cite{Y2011} raised a problem whether there is a CFL-immune set in CFL(2) - CFL/$n$. 

\vspace{\baselineskip}

Here, the $n$ of CFL/$n$ denotes the length of an input. 

\begin{definition}  \label{def:cslashntyl} (Tadaki et al. \cite{TYL2005})
Given a class ${\cal C}$ of languages, we define ${\cal C}/n$ as follows. 
Suppose that $L$ is a language over an alphabet $\Sigma$. Suppose that $\Gamma$ is another alphabet. 
We introduce an extended alphabet $\left[ \begin{array}{c} \Sigma \\ \Gamma \end{array} \right]$. It consists of all symbols of the form $\left[ \begin{array}{c} x \\ a \end{array} \right]$ for $x \in \Sigma$ and $a \in \Gamma$. 
Given two strings $x=x_1 \cdots x_n \in \Sigma^n$ and $a = a_1 \cdots a_n \in \Gamma^n$ of the same length, we let $\left[ \begin{array}{c} x \\ a \end{array} \right]$ denote the string 
$\left[ \begin{array}{c} x_1 \\ a_1 \end{array} \right] \cdots \left[ \begin{array}{c} x_n \\ a_n \end{array} \right] \in \left[ \begin{array}{c} \Sigma \\ \Gamma \end{array} \right]^n$.

A language $L$ belongs to ${\cal C}/n$ if and only if there exist a language $L^{\prime} \in {\cal C}$ and a function $h : \mathbb{N} \to \Gamma^\ast$ such that for every $x \in \Sigma^\ast$,  the length of $h(|x|)$ is the same as that of $x$ and the following holds. 
\[
x \in L \,\, \Leftrightarrow \,\,
\left[ \begin{array}{c} x \\ h(|x|) \end{array} \right] \in L^{\prime}
\]
   
Then, $h$ is called an \emph{advice function}. $h(n)$ is the advice at length $n$. 
\qed
\end{definition}

Motives for the problem of Yamakami are the following examples on DCFL (the class of languages accepted by deterministic pushdown automata) and REG (the regular languages). 

(i) Let $L_{\mathrm{eq}}$ denote the set of all strings of the form $0^n 1^n$ ($n \geq 1$). 
Then $L_{\mathrm{eq}}$ belongs to DCFL $\cap$ REG$/n$ and is REG-immune \cite{FS1974}. 

(ii) Let Pal${}_\sharp$ denote the palindromes whose center symbol is a special symbol $\sharp$. Then Pal${}_\sharp$ belongs to DCFL $-$ REG$/n$ and is REG-immune \cite{Y2011}. 

The CFL-immune set $L_{3\mathrm{eq}}$ given in \cite{FS1974} belongs to 
CFL(2) $\cap$ CFL$/n$. Thus, it is natural and interesting to ask whether there is a CFL-immune set in CFL(2) $-$ CFL$/n$. 

The best known so far is that LSPACE - CFL/$n$ has a CFL-immune set \cite{Y2011}, 
where LSPACE denotes the class of languages recognized by deterministic Turing machines with a single read-only input tape and a logarithmic-space bounded work tape. 

What is the difficult point in the problem of Yamakami? 
A classical method of showing that a language is not context-free is the pumping lemma for CFL (Bar-Hillel's lemma \cite{BPS1961}). However, the pumping lemma destroys the advice $h(n)$. 

Our main theorem is an affirmative solution to the problem of Yamakami. 
Two key concepts of our proof are the nested palindrome and Yamakami's swapping lemma. Our test language is introduced in section~\ref{section:testlanguage}. 
The swapping lemma is applicable to the setting where the pumping lemma does not work. Our proof is an example showing how useful the swapping lemma is. 
We show our main theorem in section~\ref{section:main}. 

There is a study of advised context-free languages earlier than Tadaki et al. \cite{TYL2005}. 
In the setting of  Damm and Holzer \cite{DH1995}, the arrangement of the advice and the input is serial. 
In section~\ref{section:serial}, we show that the same result as our main theorem holds for the advised language class of Damm and Holzer. 

\section{Preliminaries}

\subsection{Notation}

For two sets $A$ and $B$, their \emph{difference} $A - B$ is $\{ x \in A : x \not\in B \}$. 
$A \subset B$ denotes that $A$ is a subset of $B$; $A$ may equal to $B$. 
$\mathbb{N}=\{ 0,1,2, \ldots \}$ is the set of all natural numbers. 
For a real number $x$, $\lceil x \rceil$ denotes the minimal natural number $n \geq x$. 

The empty string is denoted by $\lambda$. 
An \emph{alphabet} denotes a finite set of characters. 
For an alphabet $\Sigma$, the set of all strings is denoted by $\Sigma^\ast$. 
We let $\Sigma^+$ denote $\Sigma^\ast - \{ \lambda \}$. 
Given a string $w$, its \emph{length} $|w|$ denotes the total number of occurrences of characters. 
The \emph{reverse} of $w=w_1 \cdots w_n$, where $n=|w|$, is $w_n \cdots w_1$.  
The reverse of $w$ is denoted by $w^R$. 

REG (CFL, respectively) is the class of all regular (context-free) languages.
Suppose that ${\cal C}$ is a given class of languages such as REG or CFL. 
An advised class ${\cal C}/n$ is defined as in Introduction. 

\begin{figure}[h]
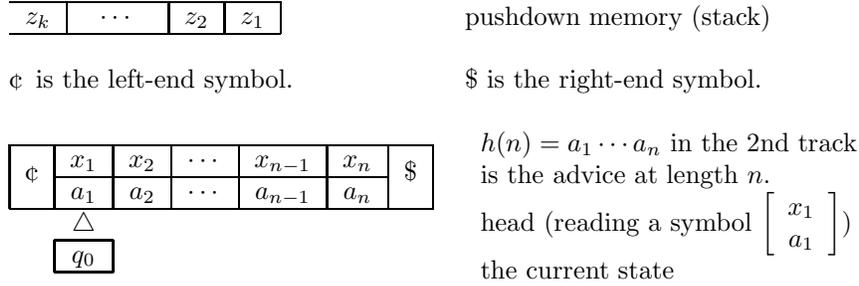

\begin{center}
\begin{tabular}{ll}
	\begin{tabular}{c|c|c|c|}
	\cline{1-4}
	$z_k$ &  \,\, $\cdots$ \,\, & $z_2$ & $z_1$\\ 
	\cline{1-4}
	\end{tabular}
	 & pushdown memory (stack) 
\\
& 
\\
	\textcent \, is the left-end symbol. & \$ is the right-end symbol.  
\\
&
\\
	\begin{tabular}{|c|c|c|c|c|c|c|}
	\cline{1-7}
	\raisebox{-5pt}[0cm][0cm]{\textcent} & $x_1$ &  $x_2$ & $\cdots$ & $x_{n-1}$ & $x_n$ & \raisebox{-5pt}[0cm][0cm]{\$}\\ 
	\cline{2-6}	
	& $a_1$ & $a_2$ & $\cdots$ & $a_{n-1}$ & $a_n$ & \\ 
	\cline{1-7}
	\multicolumn{1}{c}{} & \multicolumn{1}{c}{$\bigtriangleup$} & \multicolumn{5}{c}{} \\
	\cline{2-2}
	\multicolumn{1}{c}{} & \multicolumn{1}{|c|}{$q_0$} & \multicolumn{5}{c}{} \\
	\cline{2-2}
	\end{tabular}
&
	\begin{tabular}{l}
	$h(n)=a_1 \cdots a_n$ in the 2nd track \\
	is the advice at length $n$. \\
	head (reading a symbol $\left[ \begin{array}{c} x_1 \\ a_1 \end{array} \right]$) \\
	the current state
	\end{tabular}
\end{tabular}
\caption{A non-deterministic pushdown automaton with an advice \label{fig:npda}}
\end{center}
\end{figure}

The class CFL/$n$ is characterized by \emph{non-deterministic pushdown automata with an advice function} (Fig.~\ref{fig:npda}). 
It has a one-way read-only input tape and a pushdown memory (stack). 
The input tape has two tracks. An input is given on the first track. 
The advise at the length of the input is given on the second track. 
Then the automaton works as a non-deterministic automaton over the alphabet $\left[ \begin{array}{c} \Sigma \\ \Gamma \end{array} \right]$. 

${\cal C}(2)$ is the class of all languages that are intersections of two elements of ${\cal C}$. 
We have REG(2) = REG. On the other hand, CFL is a proper subset of CFL(2). 

An infinite language $L$ is \emph{${\cal C}$-immune} if $L$ does not have an infinite subset that belongs to ${\cal C}$. 

\subsection{The Swapping Lemma for Context-free Languages}

Suppose that $n$ is a positive integer, $S$ is a set of strings of length $n$, 
$i$ and $j$ are natural numbers such that $i+j \leq n$, 
and that $u$ is a string over $\Sigma$ of length $j$. 
Yamakami \cite{Y2009} defines a subset $S_{i,u}$ of $S$ as follows. 
\[
S_{i,u} = \{ v_1 \cdots v_n \in S : v_{i+1}\cdots v_{i+j} = u \}
\]
Thus, the definition of $S_{i,u}$ depends on $j$ but we omit the suffix $j$. 
The swapping lemma asserts that if the ratio $|S_{i,u}|/|S|$ is small enough 
for any $i,j,u$ (with certain properties) 
then there exist two strings $x, y \in S$ such that $x^\prime$ and $y^\prime$ belong to $L$, 
where strings $x^\prime$ and $y^\prime$ are obtained by swapping the midsections of $x$ and $y$, 
and such that the midsections are different. 

\begin{lemma}[The Swapping Lemma for Context-free languages \cite{Y2009}]  
Suppose that $\Sigma $ has at least two letters, and that $L$ is an infinite context-free language over $\Sigma$. 
Then, there exists a positive integer $m$, \emph{a swapping lemma constant}, with the following properties. 

Suppose that $n \geq 2$ is a natural number, $S$ is a subset of $L \cap \Sigma^n$,  
and $j_0, k$ are natural numbers such that $2 \leq j_0$, $2j_0 \leq k \leq n$, 
and such that for any positive integer $i \leq n - j_0$ and any string $u \in \Sigma^{j_0}$, 
we have: 
\[
|S_{i,u}| < |S|/m(k - j_0 +1)(n - j_0 +1)
\]
Then, there exist positive integers 
$i, j$ and two strings $x=x_1 x_2 x_3, y = y_1 y_2 y_3 \in S$ with the following properties: 
$i + j \leq n$, $j_0 \leq j \leq k$, $|x_1|=|y_1|=i$, $|x_2|=|y_2|=j$, $|x_3|=|y_3|$, 
$x_2 \ne y_2$, $x_1 y_2 x_3 \in L$, and $y_1 x_2 y_3 \in L$. 
\end{lemma}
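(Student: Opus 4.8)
The plan is to prove the lemma by a pigeonhole argument over parse trees of a fixed Chomsky normal form grammar $G$ for $L$, taking the swapping lemma constant $m$ to be a constant multiple of the number $N$ of nonterminals of $G$; concretely $m := 2N$ will do. For each $w \in S$ I fix a parse tree $T_w$ and read off from it a bounded amount of data, a ``type'', arranged so that: (i) any two strings of the same type can be swapped exactly as the conclusion demands; and (ii) if two strings of the same type have the \emph{same} midsection, then both lie in a single set $S_{i,u}$ with $|u| = j_0$, which is small by hypothesis. The whole lemma then follows by counting types.

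First I would establish the structural heart of the argument: from \emph{every} parse tree $T$ of a string $w \in \Sigma^n$ with $n \ge k \ge 2j_0$ and $j_0 \ge 2$, one can single out a decomposition $w = w_1 w_2 w_3$ with $|w_1| = i \ge 1$, $|w_2| = j$, $j_0 \le j \le k$, and $i + j \le n$, of one of two kinds. \emph{Subtree kind}: $w_2$ is the yield of a node $v$ of $T$ whose leftmost leaf is not the first leaf of $T$. Such a $v$ is found by scanning the leftmost spine of $T$ for the first right-hanging subtree of yield length $\ge j_0$, and, if that length exceeds $k$, descending inside it always into the child of larger yield; here the yield length strictly decreases (a Chomsky-normal-form node of yield length $\ge 2$ has two children, each of yield length $\ge 1$) and is at least halved at each step, so, since $k \ge 2j_0$, it cannot leap over the target window, and every node met lies to the right of position $1$, giving $i \ge 1$. \emph{Ring kind}: if instead every right-hanging subtree along the leftmost spine has yield length $< j_0$, let $n = \ell_0 > \ell_1 > \dots > \ell_r = 1$ be the yield lengths of the spine nodes; their consecutive differences are all $\le j_0 - 1$, so the least $t$ with $n - \ell_t \ge j_0$ has $n - \ell_t < 2j_0 - 1 \le k$; set $j := n - \ell_t$ and $i := \ell_t \ge 1$, let $w_1$ be the prefix of $w$ of length $i$, $w_2$ the block of length $j$ starting at position $i+1$, and $w_3 = \lambda$. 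Writing $A$ for the nonterminal at the spine node of yield length $\ell_t$, expanding the spine from the root down to that node together with all of its right-hanging subtrees shows $S \Rightarrow^{\ast} A\, w_2$ and $A \Rightarrow^{\ast} w_1$.

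Now record for each $w \in S$ its \emph{type} $\tau(w) = (i, j, A, \varepsilon)$, where $(i, j)$ is as above, $A$ is the root nonterminal of the distinguished subtree in the first case and the lower spine node in the second, and $\varepsilon$ names the kind. Since there are at most $(k - j_0 + 1)(n - j_0 + 1)$ pairs $(i,j)$ with $1 \le i$, $j_0 \le j \le k$, $i + j \le n$, there are at most $m(k - j_0 + 1)(n - j_0 + 1)$ types, with $m = 2N$. If $x, y \in S$ share a type then $x_1 y_2 x_3 \in L$ and $y_1 x_2 y_3 \in L$: in the subtree case by grafting the $y$-subtree deriving $y_2$ in place of the $x$-subtree deriving $x_2$, both rooted at $A$, and conversely; in the ring case because $S \Rightarrow^{\ast} A\, y_2 \Rightarrow^{\ast} x_1 y_2$ and $S \Rightarrow^{\ast} A\, x_2 \Rightarrow^{\ast} y_1 x_2$. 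Hence, for a same-type pair, the only way the conclusion can fail is $x_2 = y_2$. Suppose then, for contradiction, that no admissible pair exists. Then all strings in a given type class share one midsection, hence in particular agree on positions $i+1, \dots, i+j_0$ (as $j \ge j_0$), so that class lies inside $S_{i,u}$ for the common block $u \in \Sigma^{j_0}$ there, where $i$ is a positive integer with $i \le n - j \le n - j_0$. By hypothesis each such class has size strictly below $|S| / \bigl(m(k - j_0 + 1)(n - j_0 + 1)\bigr)$; summing over the at most $m(k - j_0 + 1)(n - j_0 + 1)$ type classes, which partition $S$, gives $|S| < |S|$, a contradiction. Therefore the desired $x = x_1 x_2 x_3$ and $y = y_1 y_2 y_3$ exist, and they meet all the stated requirements, including $x_2 \ne y_2$, $x_1 y_2 x_3 \in L$, $y_1 x_2 y_3 \in L$, and $|x_3| = n - i - j = |y_3|$.

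The step I expect to be hardest is the structural one: producing, from an \emph{arbitrary} parse tree, a cut of length in $[j_0, k]$ at an interior position $i \ge 1$. The ``ring'' case is the delicate point; it cannot be avoided because a parse tree may put all of its genuine branching on the leftmost spine, so that every nontrivial yield is a prefix and no interior subtree of the right size exists, and it is precisely there that the hypotheses $j_0 \ge 2$ and $k \ge 2j_0$ are used. Everything after that is routine, the only care being to keep the number of types at most $m(k - j_0 + 1)(n - j_0 + 1)$ so that the hypothesis on $|S_{i,u}|$ applies --- which is why $m$ is chosen as a constant multiple of the number of nonterminals of $G$.
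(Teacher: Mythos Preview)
The paper does not prove this lemma at all; it is quoted from Yamakami's preprint and used as a black box in the proof of the main theorem, so there is no in-paper argument to compare your attempt against.

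That said, your argument is correct and is essentially the intended proof. The parse-tree pigeonhole strategy --- fix a Chomsky normal form grammar with $N$ nonterminals, assign each $w\in S$ a type $(i,j,A,\varepsilon)$ read off a fixed parse tree, bound the number of types by $2N(k-j_0+1)(n-j_0+1)$, and observe that same-type strings can have their length-$j$ blocks starting at position $i+1$ swapped while staying in $L$ --- is exactly what underlies the swapping lemma. The delicate structural step, ensuring $i\ge1$ together with $j_0\le j\le k$ for \emph{every} CNF parse tree, is handled correctly by your subtree/ring dichotomy: in the subtree case the descent into the larger child from a node of yield $>k$ cannot drop the yield below $\lceil(k+1)/2\rceil\ge j_0+1$ since $k\ge 2j_0$, and in the ring case consecutive leftmost-spine yields differ by less than $j_0$, so the first overshoot of $n-\ell_t$ past $j_0$ stays strictly below $2j_0-1\le k$. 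The final counting is clean: each type class, if it is to avoid producing a valid swap, must sit inside a single $S_{i,u}$ with $u\in\Sigma^{j_0}$ and $1\le i\le n-j_0$, and summing the hypothesis over at most $m(k-j_0+1)(n-j_0+1)$ classes gives the contradiction $|S|<|S|$. One purely cosmetic point: you use the symbol $S$ for both the given subset of $L\cap\Sigma^n$ and the start symbol of the grammar; rename one of them in a clean write-up.
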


\section{Our Test Language} \label{section:testlanguage}

\begin{definition}
Suppose that we regard each natural number as a letter. 
Suppose $\Sigma $ is a finite subset of $\mathbb{N}$ and $w=w_1 \cdots w_n$ 
is a string over $\Sigma$. 
Given a natural number $c$, let $\Sigma_{\times c}$ denote $\{ c \times m: m \in \Sigma \}$. 
We let $(w)_{\times c}$ denote the string over $\Sigma_{\times c}$ 
given by each component of $w$ multiplied by $c$. 
\[
(w_1 \cdots w_n)_{\times c} = u_1 \cdots u_n, 
\mbox{ where } u_j = c \times w_j \mbox{, for each $j$}
\]
\qed
\end{definition}

For example, $(1211)_{\times 3} = 3633$. 

\begin{definition} \label{definition:l2}
We define our test language $L_2$ as follows (the suffix 2 is that of CFL(2)). 
\[
L_2 := \{ w (w^R)_{\times 3} (w)_{\times 15} (w^R)_{\times 5} : w \in \{ 1,2 \}^+ \}
\]
\qed
\end{definition}

\section{Main Theorem and Its Proof} \label{section:main}

\begin{lemma} \label{lemma:1}
$L_2$ belongs to $\mathrm{CFL}(2)$. 
\end{lemma}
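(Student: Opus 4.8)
I would prove the lemma by exhibiting $L_2$ as an intersection $L_A \cap L_B$ of two context-free languages. The picture to keep in mind is that every word of $L_2$ splits into four consecutive blocks of equal length,
\[
B_1 = w, \quad B_2 = (w^R)_{\times 3}, \quad B_3 = (w)_{\times 15}, \quad B_4 = (w^R)_{\times 5},
\]
living over the four pairwise disjoint alphabets $\{1,2\}$, $\{3,6\}$, $\{15,30\}$ and $\{5,10\}$ respectively. A single pushdown store can verify one ``reversal-and-scaling'' match between two adjacent blocks but not all of them at once, so the plan is to let $L_A$ enforce the two matches $B_1 \leftrightarrow B_2$ and $B_3 \leftrightarrow B_4$ and to let $L_B$ enforce the match $B_2 \leftrightarrow B_3$. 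Chaining these equalities collapses everything to a single word $w$, which gives exactly $L_2$.

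Concretely, set
\[
L_A = \{\, w_1 (w_1^R)_{\times 3} (w_2)_{\times 15} (w_2^R)_{\times 5} : w_1, w_2 \in \{1,2\}^+ \,\},
\]
\[
L_B = \{\, x\, (v^R)_{\times 3} (v)_{\times 15}\, y : x, v \in \{1,2\}^+,\ y \in \{5,10\}^+ \,\}.
\]
To see $L_A \in \mathrm{CFL}$, I build a nondeterministic pushdown automaton whose finite control tracks which of four phases it is in, and which rejects if the current input symbol lies in the wrong sub-alphabet or the phases occur out of order. In the $\{1,2\}$-phase it pushes each symbol it reads; in the $\{3,6\}$-phase it pops a symbol $t$ and checks that the symbol read equals $3t$, and it may pass to the next phase only when the stack has returned to its bottom marker; in the $\{15,30\}$-phase it pushes the symbol read divided by $15$; in the $\{5,10\}$-phase it pops $t$ and checks that the symbol read equals $5t$; it accepts when the input ends with empty stack after a nonempty fourth phase. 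Because the stack is last-in first-out, the first pair of phases forces $B_2 = (B_1^{\,R})_{\times 3}$, and the second pair forces $B_4 = (w_2^{\,R})_{\times 5}$ where $B_3 = (w_2)_{\times 15}$; these are exactly the relations defining $L_A$. A symmetric automaton --- which leaves the stack untouched in the first and fourth phases, pushes $s/3$ in the $\{3,6\}$-phase, and pops-and-checks a factor of $15$ in the $\{15,30\}$-phase --- recognizes $L_B$.

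It then remains to check $L_A \cap L_B = L_2$. The inclusion $L_2 \subset L_A \cap L_B$ is immediate: given $w (w^R)_{\times 3} (w)_{\times 15} (w^R)_{\times 5}$, take $w_1 = w_2 = v = x = w$ and $y = (w^R)_{\times 5} \in \{5,10\}^+$. For the opposite inclusion, the key observation is that, since the four sub-alphabets are pairwise disjoint, every string admits at most one factorization into a nonempty block over $\{1,2\}$, a nonempty block over $\{3,6\}$, a nonempty block over $\{15,30\}$ and a nonempty block over $\{5,10\}$, each such block being forced to be the maximal run in its alphabet. Hence the four-block decompositions witnessing membership in $L_A$ and in $L_B$ coincide; comparing second blocks gives $(w_1^R)_{\times 3} = (v^R)_{\times 3}$, so $w_1 = v$, and comparing third blocks gives $(w_2)_{\times 15} = (v)_{\times 15}$, so $w_2 = v$. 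Therefore $w_1 = w_2 = v =: w$, the word equals $w (w^R)_{\times 3} (w)_{\times 15} (w^R)_{\times 5}$, and it lies in $L_2$. Since $L_A, L_B \in \mathrm{CFL}$, this shows $L_2 \in \mathrm{CFL}(2)$.

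I do not expect a genuine obstacle in this lemma. The only points that need care are keeping the orientation of the reversals straight --- the last-in-first-out discipline of the stack is exactly why the definition of $L_2$ alternates the pattern $w,\, w^R,\, w,\, w^R$ among the scaled copies --- and defining $L_A$ and $L_B$ so that each forces the full four-block shape over the correct alphabets, so that the uniqueness-of-factorization argument is available. The real difficulty of the paper lies elsewhere, in the subsequent proof that $L_2 \notin \mathrm{CFL}/n$ and is CFL-immune, where Yamakami's swapping lemma takes over from the pumping lemma.
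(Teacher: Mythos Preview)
Your proof is correct, but it decomposes $L_2$ differently from the paper. The paper exploits the \emph{nested palindrome} structure of $L_2$: it sets
\[
L_{2,1} = \{\, w (w^R)_{\times 3}\, x : w \in \{1,2\}^+,\ x \in \{5,10,15,30\}^+ \,\}, \qquad
L_{2,2} = \{\, y (y^R)_{\times 5} : y \in \{1,2,3,6\}^+ \,\},
\]
so $L_{2,2}$ checks one global palindrome-with-scaling across the midpoint, while $L_{2,1}$ checks the inner palindrome on the first half. In the intersection one reads off $y = w(w^R)_{\times 3}$ and then $(y^R)_{\times 5} = (w)_{\times 15}(w^R)_{\times 5}$, which gives $L_2$ directly. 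Your $L_A, L_B$ instead chain three adjacent-block matches ($B_1\!\leftrightarrow\!B_2$, $B_2\!\leftrightarrow\!B_3$, $B_3\!\leftrightarrow\!B_4$) and split them two-and-one across the two CFLs, relying on unique factorization over the four disjoint sub-alphabets to glue them back together. Both routes are sound; the paper's is slightly more economical (each of its CFLs performs a single palindrome check, whereas your $L_A$ performs two, via concatenation of CFLs), and it foregrounds the ``nested palindrome'' theme that the paper later leans on. Your route has the virtue that the verification $L_A \cap L_B = L_2$ is a completely mechanical block-by-block comparison.
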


\begin{proof}
We define languages $L_{2,1}$ and $L_{2,2}$ as follows. These are clearly context-free languages.
\begin{align*}
L_{2,1} &:= \{ w (w^R)_{\times 3} x : w \in \{ 1,2 \}^+, x \in \{ 5, 10, 15, 30 \}^+ \}
\\
L_{2,2} &:= \{ y (y^R)_{\times 5} : y \in \{ 1,2,3,6 \}^+ \}
\end{align*}

Then it holds that $L_2 = L_{2,1} \cap L_{2,2}$.
Hence $L_2$ belongs to CFL(2). 
\end{proof}

\begin{lemma} \label{lemma:2}
$L_2$ is $\mathrm{CFL}$-immune. 
\end{lemma}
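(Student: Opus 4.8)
The plan is to argue by contradiction: suppose $A \subseteq L_2$ is an infinite context-free language, and derive a contradiction using the swapping lemma. Since $A$ is infinite and context-free, it has a swapping lemma constant $m$. The first task is to choose, for a suitably large length $n$ (of the form $4|w|$ for long $w$), the set $S = A \cap \Sigma^n$, which is a nonempty subset of $L_2 \cap \Sigma^n$. Every string in $S$ has the rigid shape $w(w^R)_{\times 3}(w)_{\times 15}(w^R)_{\times 5}$ with $|w| = n/4$, so a string in $S$ is completely determined by its first quarter $w \in \{1,2\}^{n/4}$; in particular $|S| \le 2^{n/4}$.

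Next I would verify the hypothesis of the swapping lemma, namely that the ``frequency'' sets $S_{i,u}$ are all small. Fix $j_0 = 2$ and take $k$ to be a fixed small constant multiple of $j_0$ (e.g. $k = 4$), so that the required bound is $|S_{i,u}| < |S| / \big(m(k-j_0+1)(n-j_0+1)\big)$, which for large $n$ is essentially $|S|/(3m(n-1))$. The point is that fixing any two consecutive symbols of a string in $S$ — at positions $i+1, i+2$ — pins down either two consecutive letters of $w$, of $(w^R)_{\times 3}$, of $(w)_{\times 15}$, or of $(w^R)_{\times 5}$, or straddles a boundary between two of these blocks; in each case this forces two bits of $w$ (or gives an immediate contradiction, shrinking $S_{i,u}$ to a size that trivially satisfies the bound). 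Hence $|S_{i,u}| \le 2^{n/4 - 2} = |S| \cdot 2^{-2}$ at worst, which is far too weak: a constant fraction, not a $1/\mathrm{poly}(n)$ fraction. This is the crux of the difficulty, and it is why $L_2$ is built from \emph{four} intertwined copies rather than fewer.

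The resolution — and the genuinely delicate step — is to avoid $S = A \cap \Sigma^n$ directly and instead work with a cleverly chosen subset, or equivalently to exploit the structure more finely. The trick is that the four blocks repeat the information of $w$ in both forward and reversed order at four different positions in the string. Fixing a short substring $u$ at position $i$ constrains $w$ in a way that is ``spread out'': knowing two letters of $w^R$ is the same as knowing two letters of $w$, but at \emph{mirrored} indices. The key combinatorial observation to nail down is that for the swapping-lemma conclusion, we actually get strings $x = x_1x_2x_3$ and $y = y_1y_2y_3$ in $S$ with $x_2 \ne y_2$ and $x_1y_2x_3, y_1x_2y_3 \in L_2$. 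Since membership in $L_2$ forces the rigid nested-palindrome structure, swapping a midsection of length $j \le k$ starting at position $i$ must be \emph{consistent} across all four blocks simultaneously: the segment of $w$ it touches, its mirror image scaled by $3$, the same segment scaled by $15$, and its mirror scaled by $5$ must \emph{all} match up after the swap. Because these four occurrences of the relevant segment of $w$ sit at four different offsets in a string of length $n = 4|w|$, and a single contiguous window of length $\le k$ cannot hit all four unless $n$ is small, we get that $x_1y_2x_3 \in L_2$ forces $y_2 = x_2$, contradicting $x_2 \ne y_2$. So the real work is: (1) choosing $n$ large and the parameters $i, j$ so that the window $[i+1, i+j]$ lies strictly inside a single one of the four blocks, (2) using the other three (untouched) blocks to read off $w$ exactly for both $x$ and $y$, and (3) concluding $x_2 = y_2$. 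To make the frequency hypothesis genuinely hold, one restricts attention to an infinite subset of lengths and uses that $|S| \ge 1$ forces $|S|$ to grow — actually, since $A$ is infinite it meets infinitely many of these lengths, and on those we can pass to the sub-collection of positions $i$ for which the window stays inside one block, over which the counting is favorable; the boundary-straddling positions are $O(k) = O(1)$ many and can be discarded by a pigeonhole/averaging argument built into the swapping lemma's statement.

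In short, I expect the hardest part to be the frequency estimate: showing $|S_{i,u}|$ is small enough for \emph{all} admissible $i$ and $u$ — in particular handling the few positions $i$ where the length-$j_0$ window $u$ straddles a block boundary and so might be satisfiable by essentially all of $S$ or by none, and arranging the argument (perhaps by discarding finitely many lengths or by a more careful choice of which $A \cap \Sigma^n$ to use, or by noting such an $S_{i,u}$ is either empty or forces two letters of $w$ anyway) so that the uniform bound $|S_{i,u}| < |S|/(m(k-j_0+1)(n-j_0+1))$ holds. Once the swapping lemma applies, the nested-palindrome rigidity of $L_2$ does the rest: the swapped strings cannot lie in $L_2$ unless the swapped midsections were equal, contradicting $x_2 \ne y_2$, and hence no infinite context-free $A \subseteq L_2$ exists.
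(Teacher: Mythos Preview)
Your approach has a genuine gap, and it is not the one you flag. The swapping lemma takes as input a subset $S \subseteq A \cap \Sigma^n$ and a frequency bound of the form $|S_{i,u}| < |S|/\mathrm{poly}(n)$. For this to have any content you need $|S|$ itself to be large --- at least polynomial in $n$. But $A$ is an \emph{arbitrary} infinite context-free subset of $L_2$; nothing prevents $A$ from having at most one string of each length (think of a unary-like language). In that case $|S| \le 1$ for every $n$, and for the unique $s \in S$ and the substring $u$ of $s$ at position $i$ you have $S_{i,u} = S$, so the hypothesis of the swapping lemma is never met. Your remark that ``$|S| \ge 1$ forces $|S|$ to grow'' is simply false, and no amount of restricting to special lengths or discarding boundary positions repairs this: the swapping lemma is a tool for showing a \emph{specific} language is not in CFL (where you can choose $S$ freely inside it), not for CFL-immunity, where the adversary hands you $A$ and you have no control over $|A \cap \Sigma^n|$.

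The paper's proof bypasses all of this with a one-line observation you have overlooked: the four blocks of any string in $L_2$ live over pairwise \emph{disjoint} alphabets $\{1,2\}$, $\{3,6\}$, $\{15,30\}$, $\{5,10\}$. Hence $L_2$ is contained in the language $L_2' = \{\,wxy : w \in \{1,2\}^+,\ x \in \{3,6\}^+,\ y \in \{5,10,15,30\}^+,\ |w|=|x|,\ 2|w|=|y|\,\}$, which has the same shape as $\{a^n b^n c^{2n}\}$. A routine Bar-Hillel pumping argument shows $L_2'$ is CFL-immune, and since every infinite subset of $L_2$ is an infinite subset of $L_2'$, the immunity of $L_2$ follows immediately. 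The swapping lemma is reserved for the harder part of the paper --- showing $L_2 \notin \mathrm{CFL}/n$ --- precisely because there the advice track makes pumping useless, whereas for plain CFL-immunity pumping is both available and far simpler.
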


\begin{proof}
$L_2$ is a subset of $L_2^\prime := \{ wxy : |w|=|x|, 2|w|=|y|, w \in \{ 1,2 \}^+, x \in \{ 3,6 \}^+  \mbox{ and } y \in \{ 5, 10, 15, 30 \}^+
\}$. 
Let $L_2^{\prime\prime}  := \{ a^n b^n c^{2n} : n \in \mathbb{N} \}$. 
A standard argument based on Bar-Hillel's lemma shows that $L_2^{\prime\prime}$ is CFL-immune, and a similar argument shows that $L_2^\prime $ is CFL-immune. 
Hence, $L_2$ is CFL-immune. 
\end{proof}

\begin{theorem} \label{theorem:main} (Main theorem)
There exists a $\mathrm{CFL}$-immune set in $\mathrm{CFL}(2) - \mathrm{CFL}/n$.
\end{theorem}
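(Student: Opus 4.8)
By Lemma~\ref{lemma:1} and Lemma~\ref{lemma:2}, $L_2$ lies in $\mathrm{CFL}(2)$ and is $\mathrm{CFL}$-immune, so the whole theorem reduces to showing $L_2 \notin \mathrm{CFL}/n$. The plan is a proof by contradiction driven by the swapping lemma rather than Bar-Hillel's lemma: pumping changes the length of a string and hence the advice it must carry, whereas swapping the midsections of two equal-length strings keeps the length fixed, and --- this is the point --- if the two strings carry the \emph{same} advice on the second track, the swap also leaves that advice intact. So assume $x \in L_2 \Leftrightarrow \left[ \begin{array}{c} x \\ h(|x|) \end{array} \right] \in L^{\prime}$ for some infinite $L^{\prime} \in \mathrm{CFL}$ over an extended alphabet and some advice $h$ with $|h(n)| = n$. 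Fix $n = 4\ell$. Every length-$n$ member of $L_2$ equals $\alpha_w := w (w^R)_{\times 3}(w)_{\times 15}(w^R)_{\times 5}$ for a unique $w \in \{1,2\}^{\ell}$, so $S := \{ \left[ \begin{array}{c} \alpha_w \\ h(n) \end{array} \right] : w \in \{1,2\}^{\ell} \}$ is a subset of $L^{\prime}$ of size $2^{\ell}$ consisting of strings of length $n$, and I would apply the swapping lemma for $L^{\prime}$ to this $S$.

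The combinatorial heart is verifying the sparsity hypothesis $|S_{i,u}| < |S|/m(k-j_0+1)(n-j_0+1)$. Since the second track of every element of $S$ is the one fixed word $h(n)$, a window content $u$ of length $j_0$ is realized in $S$ only if its second track is the matching factor of $h(n)$, and then it pins down the first-track content of that window, which in turn is read off $w$: a window lying inside one of the four blocks exposes $j_0$ consecutive letters of $w$ (up to that block's reversal and fixed rescaling), and a window straddling a single block boundary still exposes at least $\lceil j_0/2 \rceil$ consecutive letters of $w$, since the two sides of the boundary describe the same end of $w$. Hence $|S_{i,u}| \le 2^{\ell - \lceil j_0/2 \rceil} = |S|\, 2^{-\lceil j_0/2 \rceil}$ for every admissible $i$ and every $u$. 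I then choose $j_0$ and $k$ growing slowly with $n$ --- say $j_0 = \lceil 3 \log_2 n \rceil$ and $k = 2 j_0$ --- so that $2 \le j_0$, $2 j_0 \le k \le n$, and $2^{\lceil j_0/2 \rceil} \ge n^{3/2}$ dominates $m(k-j_0+1)(n-j_0+1) = O(n \log n)$; the hypothesis then holds for all large $\ell$.

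The swapping lemma now yields $i,j$ with $j_0 \le j \le k$ and two strings $x = \left[ \begin{array}{c} \alpha_w \\ h(n) \end{array} \right]$, $y = \left[ \begin{array}{c} \alpha_{w'} \\ h(n) \end{array} \right]$ in $S$ whose length-$j$ midsections satisfy $x_2 \ne y_2$, together with $x_1 y_2 x_3 \in L^{\prime}$. Because those midsections carry identical advice letters, $x_2 \ne y_2$ forces $\alpha_w$ and $\alpha_{w'}$ to differ on the first track at some position $p$ inside the window. Because the swapped region is a single interval and both strings carry $h(n)$, we get $x_1 y_2 x_3 = \left[ \begin{array}{c} \alpha^{\prime} \\ h(n) \end{array} \right]$ for a length-$n$ string $\alpha^{\prime}$ that agrees with $\alpha_w$ outside the window, and by the characterization of $\mathrm{CFL}/n$, $\alpha^{\prime} \in L_2$. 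Now invoke the rigidity of the nested palindrome: a length-$4\ell$ member of $L_2$ is determined by its first $\ell$ letters, and equally by its letters in positions $2\ell+1, \ldots, 3\ell$ (which are $15$ times the underlying word). Since $j \le k < \ell$ for large $\ell$, the window cannot meet both of these two blocks, so $\alpha^{\prime}$ shares at least one of them with $\alpha_w$; in either case the word underlying $\alpha^{\prime}$ must be $w$, whence $\alpha^{\prime} = \alpha_w$, contradicting that they differ at $p$. This contradiction establishes $L_2 \notin \mathrm{CFL}/n$ and completes the proof.

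I expect the genuinely delicate step to be this sparsity estimate together with the calibration of $j_0$ and $k$: one must check that a short window localizes at least $\lceil j_0/2 \rceil$ letters of $w$ even when it straddles a boundary, and then pick $j_0$ large enough that the exponential $2^{\lceil j_0/2 \rceil}$ overcomes the polynomial loss factor of the swapping lemma while keeping $k$ below the block length $\ell$, which is what keeps the nested-palindrome rigidity usable after the swap.
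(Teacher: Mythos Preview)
Your proposal is correct and follows essentially the same route as the paper: reduce to $L_2 \notin \mathrm{CFL}/n$, apply Yamakami's swapping lemma to the set $S$ of advice-paired encodings of all length-$n$ members of $L_2$, use the nested-palindrome structure to show a length-$j_0$ window pins at least $\lceil j_0/2\rceil$ letters of $w$ (the minimum occurring when the window is centered on a block boundary, both sides of which expose the same end of $w$), and then derive a contradiction from the rigidity of $\alpha_w$. The only differences are cosmetic: the paper takes $k=n/4$ and $j_0=2(\lceil\log(mn^2)\rceil+1)$ where you take $k=2j_0$ with $j_0=\lceil 3\log_2 n\rceil$, and the paper phrases the final contradiction via ``the window lies in at most two consecutive blocks,'' whereas you argue that blocks~1 and~3 cannot both be touched --- both variants force the underlying word to remain $w$.
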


\begin{proof}
By Lemmas~\ref{lemma:1} and \ref{lemma:2}, it is sufficient to show that $L_2$ does not belong to CFL/$n$. 
We work with Yamakami's swapping lemma for context-free languages \cite{Y2009}. 
Consult Example 4.2 of \cite{Y2009} for a basic usage of the swapping lemma. 
For a proof by contradiction, fix a function $h$ and a context-free language $L$ such that 
$\forall n ~ |h(n)| = n$,  
and such that for any $\xi \in \{ 1,2,3,6,5,10,15,30 \}^*$, the following holds.
\[
\xi \in L_2 \,\, \leftrightarrow \,\, \begin{bmatrix} \xi \\ h(| \xi |) \end{bmatrix} \in L
\]

Let $m$ be a swapping lemma constant for the context-free language $L$. 
Let $n$ be a multiple of 16 with the following property. 
\begin{equation} \label{eq:main:1}
2^{n/4} > (2mn^2)^4
\end{equation}

We define a subset $S$ of $L$ as follows. 
\[
S := \left\{ \begin{bmatrix} \xi \\ h(n) \end{bmatrix} \in L : \xi \in \{ 1,2,3,6,5,10,15,30 \}^n \right\}
\]

Since the string $w (w^R)_{\times 3} (w)_{\times 15} (w^R)_{\times 5}$ is uniquely determined by 
$w \in \{ 1,2 \}^+$, the following holds.
\begin{equation} \label{eq:main:2}
|S| = 2^{n/4}
\end{equation}

Let $k$ and $j_0$ be the followings. Here, the base of the logarithm is 2.
\begin{align}
k &:= n/4  \label{eq:main:4} 
\\
j_0 &:= 2(\lceil \log (mn^2) \rceil +1)  \label{eq:main:5} 
\end{align}

By \eqref{eq:main:1} and \eqref{eq:main:4}, 
$k = n/4 > 4 ( \log (mn^2) + 1) $. 
Since $n/4$ is a multiple of 4, $k \geq 4 ( \lceil \log (mn^2) \rceil + 1) $. 
By \eqref{eq:main:5}, we get the following. 

\begin{equation} \label{eq:main:6}
k \geq 2 j_0
\end{equation}

Given a natural number $i$ and a string $u$ over the alphabet of $L$ 
such that $i + j_0 \leq n$ and $|u|=j_0$, we define $S_{i,u}$ as follows. 
\[
S_{i,u} = \{ v_1 \cdots v_n \in S : v_{i+1}\cdots v_{i+j_0} = u \}
\]

When $w (w^R)_{\times 3} (w)_{\times 15} (w^R)_{\times 5}$ is in $S_{i,u}$, 
some bits of $w$ are bound by $u$. 
We are going to estimate the number of bits bound by $u$. 
The minimal number is achieved when $u$ 
spans the border of any two blocks with the center of $u$ at the border. 
Thus, at least $\lceil j_0 / 2 \rceil$ bits of $w$ are bound by $u$. 
Therefore, we have the following. 
\begin{equation} \label{eq:main:3}
|S_{i,u}| \leq 2^{n/4 - j_0/2} 
\end{equation}

Now, we have the following.
\begin{equation} \label{eq:main:7}
|S_{i,u}| < |S| / kmn
\end{equation}

This is shown as follows. 
$2^{j_0 / 2} \geq 2mn^2$  [by \eqref{eq:main:5}] ~ 
$= 8kmn$ [by \eqref{eq:main:4}] ~ 
$> kmn$. Thus, $2^{j_0 / 2} > kmn$. 
By  \eqref{eq:main:2} and  \eqref{eq:main:3}, we have shown \eqref{eq:main:7}. 

By \eqref{eq:main:6} and \eqref{eq:main:7}, we can apply the swapping lemma for context-free languages \cite[Lemma 4.1]{Y2009} to the present setting.

By the swapping lemma, there exist natural numbers $i,j$ and strings $x,y \in S$ with the following properties. 
\begin{itemize}
\item 
$1 \leq i \leq n -j$ and $j_0 \leq j \leq k (= n/4)$

\item
$x,y$ are of the form $x = x_1 x_2 x_3, y= y_1 y_2 y_3$, where each $x_\ell$ and $y_\ell$ are strings, 
and it holds that $|x_1|=|y_1|=i$, $|x_2|=|y_2|=j$, $|x_3|=|y_3|$, $x_2 \ne y_2$, 
$x_1 y_2 x_3 \in L$ and $y_1 x_2 y_3 \in L$.
\end{itemize}

Let $\xi, \eta, \xi_\ell$ and $\eta_\ell$ ($\ell = 1,2,3$) be the projections of 
$x, y, x_\ell$ and $y_\ell$ to the first track, respectively. 
For example, 
\(
x = \begin{bmatrix} \xi \\ h(n) \end{bmatrix}, 
y = \begin{bmatrix} \eta \\ h(n) \end{bmatrix}
\).

Since $x$ belongs to $L$ and the second component is $h(n)$, it holds that $\xi \in L_2$. Therefore $\xi$ is of the form $w (w^R)_{\times 3} (w)_{\times 15} (w^R)_{\times 5}$ for some string $w$ of length $n/4$. Here, it holds that $|\xi_2|=j \leq k = n/4 = |w|$. 
Therefore, $\xi_2$ is included by either one of $w$, $(w^R)_{\times 3}$, $(w)_{\times 15}$ and $(w^R)_{\times 5}$ or included by consecutive two of them. 
The same holds for $\eta_2$. 

Fig.~\ref{fig:iplusjleqn4} demonstrates the case of $i+j \leq n/4$. 
Here, $\xi_2$ is included by $w$. 
Fig.~\ref{fig:ilessn4lessiplusj} demonstrates the case of $i < n/4 < i+j$. 
Here, $\xi_2$ is included by $w (w^R)_{\times 3}$. The other cases are similar. 

The swapping of $x_2$ and $y_2$ does not affect the second components $h(n)$ of $x$ and $y$ 
(Fig.~\ref{fig:swapnotaffect}). 
Thus both $\xi_1 \xi_2 \xi_3$ and $\xi_1 \eta_2 \xi_3$ belong to $L_2$, and $\xi_2 \ne \eta_2$. 
Hence, we get a contradiction. 

\begin{figure}[htb]
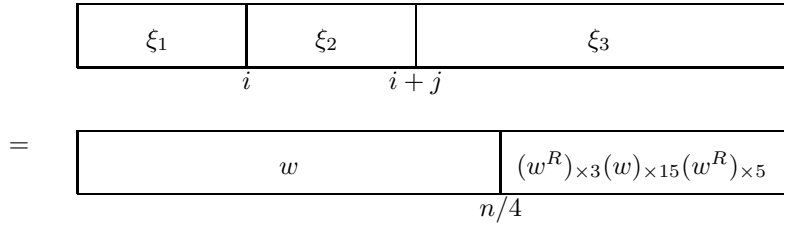

\begin{center}
\begin{tabular}{p{2em}p{2em}p{2em}p{2em}p{2em}p{2em}p{2em}p{5em}}
\cline{2-8}
 & 
\multicolumn{2}{|c|}{} & 
\multicolumn{2}{c|}{} & 
\multicolumn{3}{c}{} 
\\
 & 
\multicolumn{2}{|c|}{\raisebox{4pt}[0cm][0cm]{\quad $\xi_1$ \quad}} & 
\multicolumn{2}{c|}{\raisebox{4pt}[0cm][0cm]{\quad $\xi_2$ \quad}} & 
\multicolumn{3}{c}{\raisebox{4pt}[0cm][0cm]{$\xi_3$}} 
\\
\cline{2-8}
& & \multicolumn{2}{c}{$i$} & \multicolumn{2}{c}{$i+j$} &  &  
\\
 & & & & & & & 
\\
\cline{2-8}
= &  \multicolumn{5}{|c|}{} & 
\multicolumn{2}{c}{} 
\\
 &  \multicolumn{5}{|c|}{\raisebox{4pt}[0cm][0cm]{$w$}} & 
\multicolumn{2}{c}{\raisebox{4pt}[0cm][0cm]{$(w^R)_{\times 3} (w)_{\times 15} (w^R)_{\times 5}$}} 
\\
\cline{2-8}
 & & & & &\multicolumn{2}{c}{$n/4$} &  
\end{tabular}
\caption{The case of $i+j \leq n/4$ \label{fig:iplusjleqn4}}
\end{center}
\end{figure}

\begin{figure}[htb]
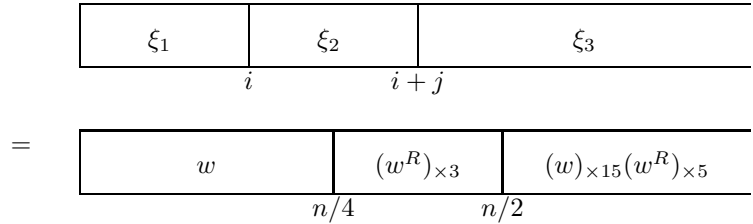

\begin{center}
\begin{tabular}{p{2em}p{2em}p{2em}p{2em}p{2em}p{2em}p{2em}p{5em}}
\cline{2-8}
 & 
\multicolumn{2}{|c|}{} & 
\multicolumn{2}{c|}{} & 
\multicolumn{3}{c}{} 
\\
 & 
\multicolumn{2}{|c|}{\raisebox{4pt}[0cm][0cm]{\quad $\xi_1$ \quad}} & 
\multicolumn{2}{c|}{\raisebox{4pt}[0cm][0cm]{\quad $\xi_2$ \quad}} & 
\multicolumn{3}{c}{\raisebox{4pt}[0cm][0cm]{$\xi_3$}} 
\\
\cline{2-8}
& & \multicolumn{2}{c}{$i$} & \multicolumn{2}{c}{$i+j$} &  &  
\\
 & & & & & & & 
\\
\cline{2-8}
= &  \multicolumn{3}{|c|}{} & 
\multicolumn{2}{c|}{} & 
\multicolumn{2}{c}{} 
\\
 &  \multicolumn{3}{|c|}{\raisebox{4pt}[0cm][0cm]{$w$}} & 
 \multicolumn{2}{|c|}{\raisebox{4pt}[0cm][0cm]{$(w^R)_{\times 3}$}} & 
\multicolumn{2}{c}{\raisebox{4pt}[0cm][0cm]{$(w)_{\times 15} (w^R)_{\times 5}$}} 
\\
\cline{2-8}
 & & & \multicolumn{2}{c}{$n/4$} &\multicolumn{2}{c}{$n/2$} &  
\end{tabular}
\caption{The case of $i < n/4 < i+j$ \label{fig:ilessn4lessiplusj}}
\end{center}
\end{figure}

\begin{figure}[H]
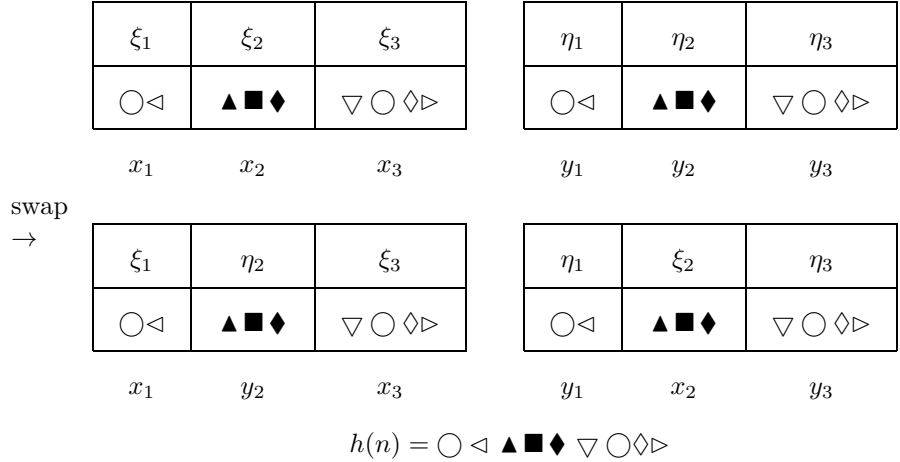

\begin{center}
\begin{tabular}{p{2.5em}p{2.5em}p{3.5em}p{4.5em}p{1em}p{2.5em}p{3.5em}p{4.5em}}
\cline{2-4} \cline{6-8}
&
\multicolumn{1}{|c|}{\raisebox{4pt}[0cm][0cm]{}} & 
\multicolumn{1}{c|}{\raisebox{4pt}[0cm][0cm]{}} & 
\multicolumn{1}{c|}{\raisebox{4pt}[0cm][0cm]{}} 
& & 
\multicolumn{1}{|c|}{\raisebox{4pt}[0cm][0cm]{}} & 
\multicolumn{1}{c|}{\raisebox{4pt}[0cm][0cm]{}} & 
\multicolumn{1}{c|}{\raisebox{4pt}[0cm][0cm]{}} \\
&
\multicolumn{1}{|c|}{\raisebox{4pt}[0cm][0cm]{$\xi_1$}} & 
\multicolumn{1}{c|}{\raisebox{4pt}[0cm][0cm]{$\xi_2$}} & 
\multicolumn{1}{c|}{\raisebox{4pt}[0cm][0cm]{$\xi_3$}} 
& & 
\multicolumn{1}{|c|}{\raisebox{4pt}[0cm][0cm]{$\eta_1$}} & 
\multicolumn{1}{c|}{\raisebox{4pt}[0cm][0cm]{$\eta_2$}} & 
\multicolumn{1}{c|}{\raisebox{4pt}[0cm][0cm]{$\eta_3$}} \\
\cline{2-4} \cline{6-8} 
&
\multicolumn{1}{|c|}{\raisebox{4pt}[0cm][0cm]{}} & 
\multicolumn{1}{c|}{\raisebox{4pt}[0cm][0cm]{}} & 
\multicolumn{1}{c|}{\raisebox{4pt}[0cm][0cm]{}} 
& & 
\multicolumn{1}{|c|}{\raisebox{4pt}[0cm][0cm]{}} & 
\multicolumn{1}{c|}{\raisebox{4pt}[0cm][0cm]{}} & 
\multicolumn{1}{c|}{\raisebox{4pt}[0cm][0cm]{}} \\
&
\multicolumn{1}{|c|}{\raisebox{4pt}[0cm][0cm]{$\bigcirc \lhd$}} & 
\multicolumn{1}{c|}{\raisebox{4pt}[0cm][0cm]{$\blacktriangle \, \blacksquare \, \blacklozenge$}} & 
\multicolumn{1}{c|}{\raisebox{4pt}[0cm][0cm]{$\bigtriangledown \bigcirc \lozenge \rhd $}} 
& & 
\multicolumn{1}{|c|}{\raisebox{4pt}[0cm][0cm]{$\bigcirc \lhd$}} & 
\multicolumn{1}{c|}{\raisebox{4pt}[0cm][0cm]{$\blacktriangle \, \blacksquare \, \blacklozenge$}} &
\multicolumn{1}{c|}{\raisebox{4pt}[0cm][0cm]{$\bigtriangledown \bigcirc \lozenge \rhd $}} 
\\
\cline{2-4} \cline{6-8} 
 & & & & & & & 
\\
&
\multicolumn{1}{c}{\raisebox{4pt}[0cm][0cm]{$x_1$}} & 
\multicolumn{1}{c}{\raisebox{4pt}[0cm][0cm]{$x_2$}} & 
\multicolumn{1}{c}{\raisebox{4pt}[0cm][0cm]{$x_3$}} & 
&
\multicolumn{1}{c}{\raisebox{4pt}[0cm][0cm]{$y_1$}} & 
\multicolumn{1}{c}{\raisebox{4pt}[0cm][0cm]{$y_2$}} & 
\multicolumn{1}{c}{\raisebox{4pt}[0cm][0cm]{$y_3$}} 
\\
 swap & & & & & & & 
\\
\cline{2-4} \cline{6-8}
$\rightarrow$
&
\multicolumn{1}{|c|}{\raisebox{4pt}[0cm][0cm]{}} & 
\multicolumn{1}{c|}{\raisebox{4pt}[0cm][0cm]{}} & 
\multicolumn{1}{c|}{\raisebox{4pt}[0cm][0cm]{}} 
& & 
\multicolumn{1}{|c|}{\raisebox{4pt}[0cm][0cm]{}} & 
\multicolumn{1}{c|}{\raisebox{4pt}[0cm][0cm]{}} & 
\multicolumn{1}{c|}{\raisebox{4pt}[0cm][0cm]{}} \\
&
\multicolumn{1}{|c|}{\raisebox{4pt}[0cm][0cm]{$\xi_1$}} & 
\multicolumn{1}{c|}{\raisebox{4pt}[0cm][0cm]{$\eta_2$}} & 
\multicolumn{1}{c|}{\raisebox{4pt}[0cm][0cm]{$\xi_3$}} 
& & 
\multicolumn{1}{|c|}{\raisebox{4pt}[0cm][0cm]{$\eta_1$}} & 
\multicolumn{1}{c|}{\raisebox{4pt}[0cm][0cm]{$\xi_2$}} & 
\multicolumn{1}{c|}{\raisebox{4pt}[0cm][0cm]{$\eta_3$}} \\
\cline{2-4} \cline{6-8} 
&
\multicolumn{1}{|c|}{\raisebox{4pt}[0cm][0cm]{}} & 
\multicolumn{1}{c|}{\raisebox{4pt}[0cm][0cm]{}} & 
\multicolumn{1}{c|}{\raisebox{4pt}[0cm][0cm]{}} 
& & 
\multicolumn{1}{|c|}{\raisebox{4pt}[0cm][0cm]{}} & 
\multicolumn{1}{c|}{\raisebox{4pt}[0cm][0cm]{}} & 
\multicolumn{1}{c|}{\raisebox{4pt}[0cm][0cm]{}} \\
&
\multicolumn{1}{|c|}{\raisebox{4pt}[0cm][0cm]{$\bigcirc \lhd$}} & 
\multicolumn{1}{c|}{\raisebox{4pt}[0cm][0cm]{$\blacktriangle \, \blacksquare \, \blacklozenge$}} & 
\multicolumn{1}{c|}{\raisebox{4pt}[0cm][0cm]{$\bigtriangledown \bigcirc \lozenge \rhd $}} 
& & 
\multicolumn{1}{|c|}{\raisebox{4pt}[0cm][0cm]{$\bigcirc \lhd$}} & 
\multicolumn{1}{c|}{\raisebox{4pt}[0cm][0cm]{$\blacktriangle \, \blacksquare \, \blacklozenge$}} &
\multicolumn{1}{c|}{\raisebox{4pt}[0cm][0cm]{$\bigtriangledown \bigcirc \lozenge \rhd $}} 
\\
\cline{2-4} \cline{6-8} 
 & & & & & & & 
\\
&
\multicolumn{1}{c}{\raisebox{4pt}[0cm][0cm]{$x_1$}} & 
\multicolumn{1}{c}{\raisebox{4pt}[0cm][0cm]{$y_2$}} & 
\multicolumn{1}{c}{\raisebox{4pt}[0cm][0cm]{$x_3$}} & 
&
\multicolumn{1}{c}{\raisebox{4pt}[0cm][0cm]{$y_1$}} & 
\multicolumn{1}{c}{\raisebox{4pt}[0cm][0cm]{$x_2$}} & 
\multicolumn{1}{c}{\raisebox{4pt}[0cm][0cm]{$y_3$}} 
\end{tabular}

\vspace{.5\baselineskip}

\hspace{4.5em} $h(n) = \bigcirc \lhd \, \blacktriangle \, \blacksquare \, \blacklozenge \, \bigtriangledown \bigcirc \lozenge \rhd $

\caption{The swapping does not affect the second track \label{fig:swapnotaffect}}
\end{center}
\end{figure}

Thus, we have shown that $L_2$ does not belong to CFL/$n$. Hence, we have shown the theorem. 
\end{proof}

\section{Serial Advices} \label{section:serial}

\subsection{A Variation of the Main Theorem}

Damm and Holzer \cite{DH1995} investigated advised language classes earlier than Tadaki et al. \cite{TYL2005}. In the definition of Damm and Holzer, the arrangement of the advice and the input is serial, while in that of Tadaki et al., it is parallel.
In this section, we show that the same result as our main theorem holds for the advised language class in the sense of Damm and Holzer. 

\begin{definition}  \label{def:cslashndh} (Damm and Holzer \cite{DH1995})
Given a class ${\cal C}$ of languages, the advised language class ${\cal C}/n$ in the sense of Damm and Holzer is defined as follows. 
Suppose that $\Sigma$ and $\Gamma_0$ are alphabets. 
Suppose that $L$ is a language over $\Sigma$. 
A language $L$ belongs to ${\cal C}/n$ (in the sense of Damm and Holzer) if and only if there exist a language $L^{\prime\prime} \in {\cal C}$ and a function $g : \mathbb{N} \to \Gamma_0^\ast$ such that 
$\forall n ~ |g(n)|=n$, and such that the following holds. 
\[
x \in L \,\, \Leftrightarrow \,\, g(|x|) ~x \in L^{\prime\prime}
\]

Here, $g(|x|) ~x$ is the concatenation of $g(|x|)$ and $x$.    
Then, $g$ is called an \emph{advice function}. $g(n)$ is the advice at length $n$. 
\qed 
\end{definition}

In the remainder of the paper, ${\cal C}/n$ denotes the advised class in the sense of Tadaki, Yamakami and Lin \cite{TYL2005}, that is, the parallel one defined in Introduction. 
On the other hand, $({\cal C}/n)_\mathrm{serial}$ denotes the advised class defined in this section. 
The following is a variation of our main theorem.

\begin{theorem} \label{theorem:maindh}
There exists a $\mathrm{CFL}$-immune set in $\mathrm{CFL}(2) - (\mathrm{CFL}/n)_\mathrm{serial}$.
\end{theorem}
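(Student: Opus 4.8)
The plan is to reduce Theorem~\ref{theorem:maindh} to the work already done for Theorem~\ref{theorem:main}. The only difference between the two settings is bookkeeping: in the serial model the advice $g(n)$ is prepended to the input as a literal prefix over a separate alphabet $\Gamma_0$, rather than being written on a second track. So I would first observe that Lemmas~\ref{lemma:1} and \ref{lemma:2} are untouched --- $L_2 \in \mathrm{CFL}(2)$ and $L_2$ is $\mathrm{CFL}$-immune regardless of which advice convention we later contemplate --- and hence it again suffices to prove $L_2 \notin (\mathrm{CFL}/n)_\mathrm{serial}$.

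For the core argument I would again argue by contradiction: suppose $g : \mathbb{N}\to\Gamma_0^\ast$ with $|g(n)|=n$ and a context-free $L''$ witness $L_2 \in (\mathrm{CFL}/n)_\mathrm{serial}$, so that for $\xi$ over $\{1,2,3,6,5,10,15,30\}$ we have $\xi \in L_2 \Leftrightarrow g(|\xi|)\,\xi \in L''$. Fix a swapping-lemma constant $m$ for $L''$ and choose $n$ a multiple of $16$ satisfying the same inequality $2^{n/4} > (2mn^2)^4$ as in \eqref{eq:main:1}, but with $n$ replaced throughout by the total string length $2n$ wherever the swapping lemma's ambient length parameter is needed. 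Define
\[
S := \{\, g(n)\,\xi \in L'' : \xi \in \{1,2,3,6,5,10,15,30\}^n \,\},
\]
a set of strings of the common length $2n$, with $|S| = 2^{n/4}$ exactly as in \eqref{eq:main:2}. The key point is that every string in $S$ shares the fixed prefix $g(n)$ of length $n$, so the ``interesting'' coordinates all lie in positions $n+1,\dots,2n$. I would then set $k := n/4$ and $j_0 := 2(\lceil \log(mn^2)\rceil + 1)$ as before, and for $i+j_0 \le 2n$, $|u|=j_0$ define $S_{i,u}$ on strings of length $2n$. If $u$ lands entirely inside the prefix $g(n)$ then $S_{i,u}$ is either empty or all of $S$; but in the latter case $g(n)$ literally contains $u$, and since $j_0 \ge 2$ we can always perturb the choice of $i$ by one or, more cleanly, note that such $u$ simply does not threaten the count we need. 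The substantive estimate is when $u$ overlaps the suffix: then, as in the main proof, $u$ binds at least $\lceil j_0/2\rceil$ bits of the underlying $w \in \{1,2\}^{n/4}$ (the minimum occurring when $u$ straddles a block boundary), giving $|S_{i,u}| \le 2^{n/4 - j_0/2}$, and the same chain $2^{j_0/2} \ge 2mn^2 = 8kmn > kmn$ yields $|S_{i,u}| < |S|/km(2n)$, which suffices since $km(2n) \ge m(k-j_0+1)(2n-j_0+1)$ fails --- so here I would instead pick $n$ large enough that $2^{j_0/2}$ also beats $m(k-j_0+1)(2n-j_0+1) \le 2mkn \le 2mn^2$, which is already guaranteed by \eqref{eq:main:1} after the harmless substitution. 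Applying the swapping lemma to $S \subset L'' \cap \Sigma_0^{2n}$ (where $\Sigma_0$ is the combined alphabet) produces $i,j$ with $j_0 \le j \le k = n/4$ and $x = x_1x_2x_3$, $y = y_1y_2y_3 \in S$ with $x_2 \ne y_2$, $x_1y_2x_3 \in L''$, $y_1x_2y_3 \in L''$.

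Now the geometric argument is the same as before but shifted. Since $j \le n/4$ and $x_2$ begins at position $i+1$, either $x_2$ lies entirely in the prefix $g(n)$, or entirely in the suffix $\xi$, or it straddles the single boundary at position $n$. If $x_2$ (and likewise $y_2$) lies inside the prefix, then swapping changes only the prefix portion, contradicting $|g(n)|$ being a fixed function of the total length --- more precisely, $x_1y_2x_3$ and $x$ have the same suffix $\xi$ and the same length $2n$, so $x_1y_2x_3 \in L'' \Leftrightarrow \xi \in L_2 \Leftrightarrow x \in L''$ gives no contradiction directly, so one must instead exploit $x_2 \ne y_2$ together with the fact that a string of the form $g(2n\text{-prefix})\cdot(\text{anything})$ in $L''$ forces that prefix to be exactly $g(n)$; the cleanest route is to choose $j_0 > $ the length needed so that at least one ``interesting'' position is bound, but since here $g(n)$ can be anything, I would instead simply rule out the prefix-only case by noting $i + j \le n$ together with $j \ge j_0 \ge 2$ would let us slide the window, so the honest fix is: rather than counting bits of $w$, count ``input positions touched''. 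If $x_2 \subseteq g(n)$, then $\xi$ is unchanged by the swap yet $x_1 y_2 x_3 \ne x$, and $x_1 y_2 x_3 = g'(n)\,\xi$ for some $g'(n) \ne g(n)$; but membership in $L''$ depends on whether $\xi \in L_2$, which it does since $x \in S$, so $x_1y_2x_3 \in L''$ is consistent and gives nothing --- hence the real obstacle, and the one step I expect to require genuine care, is handling the case where the swapped midsection sits inside (or partly inside) the advice prefix. I would resolve it by strengthening the lower bound $j_0$ so that $j_0 > n$ is impossible but arranging that $\lceil j_0/2 \rceil$ positions of $\xi$ are always bound: concretely, since $|x_2| = j \ge j_0$ and $x_2$ occupies positions $i+1,\dots,i+j$, if $i + j \le n$ the window is inside the prefix and at least $j - \max(0, n-i) \ge j - (j_0 - 1)$... this is exactly where one must either (a) show the swapping lemma can be invoked with the extra constraint $i \ge n - j_0 + 1$ forcing overlap, or (b) pad $L_2$ with a sentinel block so that no short substring is confined to the advice. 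Given the paper's style, I expect the author simply enlarges $n$ and observes that in every case at least $\lceil j_0/2\rceil \ge \lceil \log(mn^2)\rceil + 1 > 0$ bits of $w$ are determined by the window position, so the two distinct midsections produce two distinct $\xi, \xi'$ both in $L_2$ with identical prefix-of-length-$n$ behaviour --- and since the swap leaves $g(n)$ intact (it sits in positions $1,\dots,n$, disjoint from the part of the window inside the suffix, or when the window is inside the prefix the conclusion $x_2 \ne y_2$ with both results in $L''$ contradicts functionality of $g$), we get $\xi, \xi' \in L_2$ with $\xi \ne \xi'$ yet forced to the same underlying $w$, the contradiction. I would therefore present the proof as: ``the argument is identical to that of Theorem~\ref{theorem:main}, replacing the two-track picture of Figure~\ref{fig:swapnotaffect} by the serial picture in which $g(n)$ occupies a fixed prefix; the only new case, $x_2$ meeting the prefix, is handled by'' --- and then the short paragraph above. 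The main obstacle, to state it plainly, is the prefix-overlap case; everything else transfers verbatim with $n \rightsquigarrow 2n$ in the ambient-length role of the swapping lemma.
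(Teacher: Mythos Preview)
Your plan is the paper's plan: reuse $L_2$, keep Lemmas~\ref{lemma:1} and~\ref{lemma:2}, and rerun the swapping-lemma argument on the length-$2n$ strings in $S=\{g(n)\,\xi : \xi\in L_2,\ |\xi|=n\}$. The place where your write-up goes off the rails is the ``prefix case'' after the swap, and the missing observation is a one-liner. Both $x=x_1x_2x_3$ and $y=y_1y_2y_3$ belong to $S$, so \emph{both} carry the \emph{same} first $n$ symbols, namely $g(n)$. Consequently, if the swapped window $[i{+}1,\,i{+}j]$ were contained in $[1,n]$ then $x_2$ and $y_2$ would be the identical substring of $g(n)$, contradicting $x_2\neq y_2$. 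That is the whole argument: the prefix case for the swap conclusion is vacuous, the window must reach into the input block, and from there the contradiction is exactly that of Theorem~\ref{theorem:main} (with $|x_2|\le n/4$ the window touches at most two consecutive $w$-blocks). Your detours through ``functionality of $g$'', sentinel padding, and sliding $i$ are unnecessary; the paper disposes of this in one sentence: ``Since $x_2$ and $y_2$ are not identical, they are not substrings of $g(n)$.''

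Your other worry---that the swapping-lemma \emph{hypothesis} $|S_{i,u}|<|S|/m(k-j_0+1)(2n-j_0+1)$ can fail when the length-$j_0$ window lies inside the common prefix, since then $S_{i,u}$ is either empty or all of $S$---is a fair observation; the paper simply asserts $|S_{i,u}|<|S|/km(2n)$ without addressing it. So that step does want more care (e.g.\ a form of the lemma restricting the range of $i$). But it is not what is wrong with your proposal: your actual gap is that you never spot the shared-prefix observation above, and so you treat the post-swap prefix case as an open difficulty rather than the triviality it is.
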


\begin{proof}
Let $L_2$ be the language defined in Definition~\ref{definition:l2}. 
In the same way as the proof of Theorem~\ref{theorem:main}, 
it is sufficient to show that $L_2$ does not belong to $(\mathrm{CFL}/n)_\mathrm{serial}$. 
Fix an advice function $g$ and a context-free language $L$ such that for any 
$x \in \{ 1,2,3,6,5,10,15,30 \}^*$, the following holds. 
\[
x \in L_2 \,\, \leftrightarrow \,\, g(|x|)~x \in L
\]

Choose natural numbers $m, n, k, j_0$ in the exactly same way as the proof of Theorem~\ref{theorem:main}. 
We define a subset $S$ of $L$ as follows. 
\[
S := \left\{ g(n)~x \in L : x \in \{ 1,2,3,6,5,10,15,30 \}^n \right\}
\]

Thus, each member of $S$ has length $2n$. 
We have $|S_{i,u}| < |S| / km(2n)$. 
By the swapping lemma, there exist natural numbers $i,j$ and strings $x$ and $y$ of the following properties. 
$g(n)~x$ and $g(n)~y$ are in $S$ and they are of the form $g(n)~x = x_1 x_2 x_3$ and $g(n)~y= y_1 y_2 y_3$, where each $x_\ell$ and $y_\ell$ are strings, 
and it holds that $|x_1|=|y_1|$, $|x_2|=|y_2| \leq n/4$, $|x_3|=|y_3|$, $x_2 \ne y_2$, 
$x_1 y_2 x_3 \in L$ and $y_1 x_2 y_3 \in L$.

Since $x_2$ and $y_2$ are not identical, they are not substrings of $g(n)$. 
Hence, in the same way as the proof of Theorem~\ref{theorem:main}, we get a contradiction. 

Thus, we have shown that $L_2$ does not belong to $(\mathrm{CFL}/n)_\mathrm{serial}$. 
Hence, we have shown the theorem.
\end{proof}

\subsection{Remarks and Problems}

In this subsection, we discuss separations of parallel advice classes and serial advice classes. 

In general, the two concepts of advised classes do not coincide. 
Recall that REG denotes the class of all regular languages. 

\begin{example} \label{example:regslashndifferent1}
$\mathrm{REG}/n$ is not a subset of $(\mathrm{REG}/n)_\mathrm{serial}$. 
A proof is as follows. 
Damm and Holzer show that the language $L_\mathrm{eq} = \{ 0^n 1^n : n \in \mathbb{N} \}$ does not belong to $(\mathrm{REG}/n)_\mathrm{serial}$ \cite[Propositions 1 and 7]{DH1995}. 
On the other hand, let $h(n)$ be $0^{n/2}1^{n/2}$ if $n$ is even; $2^n$ otherwise. 
Then, by means of an advice function $h$, $L_\mathrm{eq}$ is shown to be in $\mathrm{REG}/n$. 
\qed
\end{example}

\begin{example} \label{regslashnsubset}
$(\mathrm{REG}/n)_\mathrm{serial}$ is a subset of $\mathrm{REG}/n$. 
A proof is as follows. 
Suppose $L$ is an element of $(\mathrm{REG}/n)_\mathrm{serial}$. Let $L^{\prime\prime}$ and $g$ be a regular language and an advice function satisfying the requirements in Definition~\ref{def:cslashndh}. Let $M$ be a deterministic finite automaton that accepts $L^{\prime\prime}$. 
Given a natural number $n$, let $q_{(n)}$ be the state when $M$ has read $g(n)$. 
Provided that $q_{(n)}$ is given, without knowing what $g(n)$ is, we can simulate the moves of $M$ after reading $g(n)$.  
Then, we define $h(n)$ as to be $q_{(n)}~0^{n-1}$.  
The 0s in the tail are just for adjusting the length of $h(n)$. 
Let $\Gamma$ be the union of $\{ 0 \}$ and the set of the states of $M$. 
Now, it is easy to define a regular language $L^\prime$ satisfying the requirements in Definition~\ref{def:cslashntyl} 
with respect to $\Gamma $ and $h$. 
Therefore, $L$ belongs to $\mathrm{REG}/n$. 
\qed
\end{example}

A direct proof of Example~\ref{example:regslashndifferent1} is given by means of prefix-free Kolmogorov complexity. 

\begin{definition} \cite{N2009}
\begin{itemize}
\item 
A string $u=u_1\cdots u_m$ is a \emph{prefix} of a string $v=v_1 \cdots v_n$ if $m \leq n$ and for each $i \leq m$ it holds that $u_i = v_i$. 
A set $S$ of strings is \emph{prefix-free} if for each $u, v \in S$ such that $u \ne v$, $u$ is not a prefix of $v$. A partial function $M: \{ 0,1 \}^\ast \to \{ 0, 1 \}$ is called a \emph{prefix-free machine} if $M$ is a partial recursive function and the domain of $M$ is a prefix-free set. 

\item 
For a prefix-free machine $M$, its \emph{descriptive complexity} $K_M: \{ 0,1 \}^\ast \to \mathbb{N} \cup \{ \infty \}$ is defined as follows. Suppose $x \in \{ 0,1 \}^\ast$. If there exists $\sigma \in \{ 0,1 \}^\ast$ such that $M(\sigma )=x$ then $K_M (x)$ is the length of a shortest such $\sigma$. 
If there is no such $\sigma$ then $K_M (x)$ is $\infty$.

\item
A prefix-free machine $R$ is an \emph{optimal prefix-free machine}\/ if for each prefix-free machine $M$,  there is a constant $d$ (depending on $M$) such that for each $x \in \{ 0,1 \}^\ast$, 
$K_R(x) \leq K_M(x) + d$.
\end{itemize}
\qed
\end{definition}

It is known that there exists an optimal prefix-free machine \cite[Proposition 2.2.7]{N2009}. 
We fix such a machine $R$, and let $K$ denote $K_R$. 
For an infinite binary string $Z : \mathbb{N} \to \{ 0,1 \}$ and a natural number $n$, we let $Z \upharpoonright n$ denote $Z(0)Z(1)\cdots Z(n-1)$. 
It is known that there exists a binary string $Z$ of the following property \cite[section 3.2]{N2009}.
\begin{equation} \label{equation:chaitinrandom}
\exists b \in \mathbb{N} ~ \forall n \in \mathbb{N} \,\, [K(Z \upharpoonright n) > n -b]
\end{equation}

Example~\ref{example:regslashndifferent2} is a refinement of the proof by Damm and Holzer \cite{DH1995} that $L_\mathrm{eq} = \{ 0^n 1^n : n \in \mathbb{N} \} \not\in (\mathrm{REG}/n)_\mathrm{serial}$. 

\begin{example} \label{example:regslashndifferent2}
A direct proof that $\mathrm{REG}/n$ is not a subset of $(\mathrm{REG}/n)_\mathrm{serial}$. 
Let $Z$ be an infinite binary string satisfying \eqref{equation:chaitinrandom}. 
Let $L := \{ Z \upharpoonright n : n \in \mathbb{N} \}$. 

By means of an advice function $h(n) = Z \upharpoonright n$, $L$ is shown to be in $\mathrm{REG}/n$.

We are going to show that $L$ does not belong to $(\mathrm{REG}/n)_\mathrm{serial}$. 
Assume that $L$ belongs to it. Suppose that $L^{\prime\prime}$ and $g$ are a regular language and an advice function satisfying the requirements in Definition~\ref{def:cslashndh}. 
Suppose that $M$ is a deterministic finite automaton that accepts $L^{\prime\prime}$. 
Let $Q$ be its set of states. 
For each $n$, let $q_{(n)}$ be the state when $M$ has read $g(n)$.

We define a deterministic Turing machine $N$ as follows. 
An input is an ordered pair $(q,n) \in Q \times \mathbb{N}$. 
For each $y \in \{ 0,1 \}^n$, simulate the moves of $M$ as follows. 
Set the state (of the virtual $M$) being $q$. Let $M$ read $y$. If $M$ accepts $y$, return $y$ and halt. 

If the for-loop finishes without any output, then $N$ does not halt. 

By the definition of $L^{\prime\prime}$ and $g$, for the input $(q_{(n)}, n)$, $N$ outputs $Z \upharpoonright n$. In addition, by a certain appropriate coding, we may assume that the domain of $N$ is a prefix free set. 
For example, code an ordered pair $(u_1 \cdots u_m, v_1 \cdots v_\ell)$ by a string $u_1 u_1 \cdots u_m u_m 01 v_1 v_1 \cdots v_\ell v_\ell 01$. 

Therefore, $ K_N (Z \upharpoonright n)$ is in the order of the length of $(q_{(n)},n)$. 
Thus, it is $O ( \log_2 (n))$. 
Hence, by the definition of an optimal machine, 
$K(Z \upharpoonright n) \leq K_N (Z \upharpoonright n) + O(1) = O(\log_2 n)$. 
This contradicts to the assumption of \eqref{equation:chaitinrandom}. 
\qed
\end{example}

To our knowledge, we do not know whether the following hold. 

\begin{enumerate}
\item $\mathrm{CFL}/n \subset (\mathrm{CFL}/n)_\mathrm{serial}$~?
\item $(\mathrm{CFL}/n)_\mathrm{serial} \subset \mathrm{CFL}/n$~?
\end{enumerate}
 
\section*{Acknowledgment}

The author would like to thank Tomoyuki Yamakami, Masahiro Kumabe and Yuki Mizusawa for helpful discussions. 



\begin{thebibliography}{XX}

\bibitem{BPS1961}
Bar-Hillel, Y., Perles, M. and Shamir, E.: 
On formal properties of simple phrase structure grammars. 
\emph{Zeitschrift f{\"u}r Phonetik, Sprachwissenschaft und Kommunikationsforschung}, \textbf{14} 
pp.143--172 (1961).

\bibitem{DH1995}
Damm, C. and Holzer, M.: 
Automata that take advice. 
In: 
\emph{Proc. 20th Symposium on Mathematical Foundations of Computer Sciences}, Lecture Notes in Comput. Sci., \textbf{969}  
pp.149--158, Springer, 1995.

\bibitem{FS1974}
Flajolet, P. and Steyaert, J.M.: 
On sets having only hard subsets. 
In: 
\emph{Proc. 2nd International Colloquium on Automata, Languages, and Programming},  Lecture Notes in Comput. Sci., \textbf{14}  
pp.446--457, Springer, 1974.

\bibitem{HU1979}
Hopcroft, J.E. and Ullman, J.D.: 
\emph{Introduction to Automata Theory, Languages, and Computation}. 
Addison-Wesley, 1979.

\bibitem{N2009}
Nies, A.:
\emph{Computability and Randomness}. 
Oxford, 2009.

\bibitem{TYL2005} 
Tadaki, K., Yamakami, T. and Lin, J.C.H.: 
Theory of one-tape linear-time Turing machines.
\emph{Theoret. Comput. Sci.}, \textbf{411} 
pp.22--43 (2010).

\bibitem{Y2009}
Yamakami, T.: 
Swapping lemmas for regular and context-free languages. 
\emph{preprint}, arXiv:0808.4122v2 (2009). 
The version 1 is arXiv:0808.4122v1 (2008).

\bibitem{Y2011}
Yamakami, T.: 
Immunity and pseudorandomness of context-free languages. 
\emph{Theoret. Comput. Sci.}, \textbf{412} 
pp.6432--6450 (2011).

\bibitem{YK2013}
Yamakami, T. and Kato, Y.: 
The dissecting power of regular languages. 
\emph{Inform. Process. Lett.}, \textbf{113} 
pp.116--122 (2013).

\bibitem{YS2005}
Yamakami, T. and Suzuki, T.: 
Resource bounded immunity and simplicity. 
\emph{Theoret. Comput. Sci.}, \textbf{347} 
pp.90--129 (2005).

\end{thebibliography}
\end{document}